\theoremstyle{plain}
\newtheorem{thm}{Theorem}[section]
\newtheorem{lem}[thm]{Lemma}
\newtheorem*{cor}{Corollary}
\theoremstyle{definition}
\newtheorem{defn}{Definition}[section]
\theoremstyle{remark}
  \def\firstcircle{(90:.55cm) circle (.75cm)}
  \def\secondcircle{(210:.55cm) circle (.75cm)}
  \def\thirdcircle{(330:.55cm) circle (.75cm)}
\title{Probabilistic Partitive Partitioning (PPP)}
\author{
  Mujahid Sultan \\
  Department of Computer Science\\
  Ryerson University\\
  \texttt{mujahid.sultan@ryerson.ca} \\
}
\begin{document}

\maketitle

\begin{abstract}
Clustering is NP-hard problem. Thus, no optimal algorithm exists, heuristics are applied to cluster the data. Heuristics can be very resource intensive, if not applied properly. For substantially large data sets computational efficiencies can be achieved by reducing the input space, if minimal loss of information can be achieved. Clustering algorithms, in general, face two common problems: 1) these converge to different settings with different initial conditions and; 2) the number of clusters has to be arbitrarily decided beforehand. This problem has become critical in the realm of big data. Recently, clustering algorithms have emerged which can speedup computations using parallel processing over the grid but face the aforementioned problems. \textbf{Goals:}  Our goals are to find methods to cluster data which: 1) guarantee convergence to the same settings irrespective of the initial conditions; 2)  eliminate the need to establish number of clusters beforehand, and 3) can be applied to cluster large datasets. \textbf{Methods:} We introduce a method which combines probabilistic and combinatorial clustering methods to produce repeatable and compact clusters which are not sensitive to initial conditions. This method harnesses the power of k-means (a combinatorial clustering method) to cluster/partition very large dimensional datasets, and uses Gaussian Mixture Model (a probabilistic clustering method) to validate the k-means partitions. \textbf{Results:} We show that this method produces very compact clusters which are not sensitive to initial conditions. This method can be used to identify the most 'separable' set in a dataset which increases the 'clusterability' of a dataset. This method also eliminates the need to specify number of clusters in advance.
\end{abstract}
\section{Introduction} 
 
Clustering is a process of partitioning a dataset $D$ into $k$ subsets called clusters  $D_i, i=1,\ldots,k$, such that some distance measure is minimised within clusters and maximised between clusters. Shai et al \cite{ackerman2009clusterability} identified that it is difficult to achieve these goals together, and  \cite{balcan2009approximate, ostrovsky2006effectiveness, ostrovsky2012effectiveness} showed that if the data is well-clusterable according to a certain ''clusterability'' or ''separation'' condition then various Lloyd  \cite{lloyd1982least} style methods do indeed perform well and return a provably near-optimal clustering. We present a method which discovers the best ''seperation'' in a dataset to increase its ''clusterability''. A comprehensive overview of clustering methods, state-of-the-art, issues and empirical studies can be found in \cite{berkhin2006survey} and \cite{xu2005survey}. A detailed overview of clustering methods for \emph{large datasets} is given in \cite{fahad2014survey}.

Clustering can be divided into two classes: agglomerative (or bottom-up) and divisive (or  top-down). Agglomerative clustering approaches are inherently inferior in defining clusters, as the clustering decisions are made at root level and any mistake made earlier cannot be corrected later on~\cite{friedman2001elements}. Divisive or top-down clustering approaches produce better results because these consider full populations for making cluster decisions to begin with. One of the most common and well studied divisive clustering algorithm is k-means \cite{wu2012advances, mangiameli1996comparison}. Despite its wide spread use and ability to clusters very high dimensional datasets, it has two major drawbacks: 1) sensitivity to the initial conditions and 2) the need to specify the number of clusters in advance \cite{arthur2007k}.
Formally, the k-means problem can be stated as follows. Given an integer $k$ and a set of $n$ data points $x \in \mathbb{R}^d$, choose $k$ centres $\textsc{c}$ so as to minimise the objective function $f$,
$f = \sum_{x \in X} min_{c \in \textsc{c}} ||x-c||^2.$
Combinatorial algorithms, like k-means, are based on iterative greedy descent. An initial partition is calculated. At each iterative step, the cluster assignments are revised in such a way that the value of some criterion (e.g. similarity or distance measure) is improved from its previous value. This type of clustering algorithms alter cluster assignments at each iteration. When the criterion (e.g., similarity or distance measure) is unable to improve, the algorithm is terminated with the current assignments as its clustering structure. In very high dimensional spaces, these algorithms converge to local optima/local minima which may be highly sub-optimal when compared to the global optimum \cite{friedman2001elements}. Lloyd's \cite{lloyd1982least} algorithm is widely used implementation for k-means. The reason of its popularity is its simplicity and speed (with which it converges). 

\subsection{Background}
The need to design PPP primarily arose from the problem sets which involve clustering of both the feature space and the instance space, such as microarray gene expression clustering \cite{sturn2002genesis}, in which the features are tissue samples (or the patients) and the instances are the genes. It is very important to group similar or discriminating genes for a specific disease (sample) as well as it is equally important to group the samples/patients. We can think of it as a  two-way clustering problem. Portfolio of stocks, for a specific period of time, is another example, with historical values (of a stock) as instances. Feature selection in a large dataset of images is another example, where images are instances and features to be clustered as lateral dimension. We define the design matrix in the Section~\ref{desig_mat}.

K-means implementation by Loyd's \cite{lloyd1982least} algorithm and similar implementations like k-means++ \cite{arthur2007k} have an inherent issue, that these are poised to get stuck in a local minima while clustering very high dimensional spaces. 

Dimensionality has to be reduced in instance space to conduct a meaningful analysis. Some of the most used methods for dimensionality reduction are Principle Component Analysis (PCA) \cite{jolliffe2002principal}, Locally Linear Embedding (LLE) \cite{roweis2000nonlinear}, and Support Vector Machines (SVM)~ \cite{cristianini2000introduction, tayal2014primal}. The prime task of dimensionality reduction methods is to preserve a global structure of the data as much as possible. PCA and SVM take a projection of the data to a suitable or favourable dimension \cite{roweis2000nonlinear}.  The resulting dimensionality reduction may help the task at hand like visualisation. Unfortunately, these methods are not good for compression of the data,  when the re-construction of original data is required \cite{roweis1998algorithms}. Vector Quantization (VQ), on the other hand, is a much better mechanism for dimensionality reduction of high dimensional data, if re-construction is needed \cite{gersho2012vector}.

\section{Methods: Probabilistic Partitive Partitioning (PPP)} \label{PPP}
The idea is to use k-means in the feature space to cluster the dataset, k-means is quite fast and handles high dimensionality very well. But the issue with the k-means is that it can very easily get stuck in a local minima when the data is very high dimensional. PPP offers a mechanism to take k-means out of local minima. 
Arthur and Vassilvitskii~\cite{arthur2007k} prescribed k-means++, which optimises k-means initial conditions so that it does not get stuck in a local minimum. But this method will not work well if applied to very high-dimensional feature space. Bahamani et al.~\cite{bahmani2012scalable} described a method to parallelize k-means++, but the method has limitations when the data are distributed across the network. On the other hand, PPP works well in very high-dimensional feature space and can deal with distributed data over the network as well.The PPP algorithm is described below.

The pseudo code of PPP is given in the Algorithm \ref{alg:PPP}. In section~\ref{PPP_details} we describe each step of the algorithm in detail.
\subsection{The Design Matrix} \label{desig_mat}
Here we define the design matrix and dataset of focus. The design matrix $\textbf{x} \in \mathbb{R}^{N \times f}$ has $N$ rows or instances and $f$ columns or features. $\Vec{x_i}$ = $x_1,x_2,\ldots,x_f$; ($\Vec{x_i}$ is a $f$ dimensional vector) and $i = 1,\ldots,N$. The instance space is characterised by very high dimensionality: $N \gg 10^7$. The feature space is also vast: $f \sim 10^2 - 10^3$. 
\begin{algorithm}[ht!]
\caption{PPP Algorithm}\label{alg:PPP}
\begin{algorithmic} 
\STATE \textbf{Step 1:} Perform VQ on instance space
\STATE \textbf{Step 2:} Build Gaussian Mixture Model (GMM) on root coodebook vectors
\WHILE{\textbf{$\Phi$ is NOT optimal}}
\STATE \textbf{Step 3:} Partition sample space (entire data or GMM vectors with probabilities above .5) using k-means with C=2
\STATE \textbf{Step 4:} Perform VQ on both partitions
\STATE \textbf{Step 5:} Build GMMs for each partitions from the coodebook vectors
\STATE \textbf{Step 6:} Find posterior probabilities of root coodebook of generating the partition (by k-means)
\STATE \textbf{Step 7:} Update PPP objective function $\Phi$
\ENDWHILE
\STATE \textbf{Step 8:} Repeat steps 1 to 7 for each partition
\end{algorithmic}
\end{algorithm}

\subsection{Detailed Description of PPP} \label{PPP_details}

\subsubsection{Step 1: VQ of instance space} 

The first step of the PPP algorithm reduces the instance space with vector quantization (VQ) using Self-Organising Maps (SOM) \cite{kohonen1998self, kohonen2013essentials}. We use Matlab implementation of SOM Toolbox \cite{vesanto1999self}. Just like scalar quantization is used in telecommunications, where a continuous signal is reduced to quantized signal and transmitted across the network, and re-constructed back at the other end (this is called coding and encoding). Similarly VQ is a mapping of the data vectors from input space to a reduced space, called codebook vectors. Comprehensive background and theory of VQ is given by \cite{gray1998quantization}.

VQ methods like Dirichlet tessellations \cite{dirichlet1850reduction} and Voronoi tessellations \cite{voronoi1908nouvelles} date back to ninetieth century. And in the modern times introduced by Lloyd's \cite{lloyd1982least}  (in scalar form) and Forgy \cite{forgy1965cluster} in vector form. 

VQ is defined as $m_{k} \leftarrow \textbf{x}; k = 1,\ldots,K,$ where $m_k$ is randomly initiated codebook vector the numbers of which is either arbitrarily selected (to represent the granularity required) or based on some measure like principle components of the data. VQ objective is to find $m_c$, which is the winning vector given by
\begin{equation}\label{VQ}
\|\textbf{x}-m_{c}\| =   min_{\substack{i}} { \|\textbf{x}-m_{k}\| }
\end {equation}
and the mean quantization error is given by $E =  \int  \| \textbf{x} - m_{c} \|^{2} p(\textbf{x}) \mathrm{d}V$ using Euclidean distance. 
where, $p(\textbf{x})$ is the probability density of the data and $ \mathrm{d}V$ volume differential. $E$ is minimised by gradient descent.
Now at any time $t$ let, $ m_{c} = m_{c}(t)$ and $x_i = x_i(t)$ 
The gradient descent optimisation in the $m_{c}$ space is given by
\begin{equation} \label{eq:som_learn}
m_{c}(t+1) = m_{c}(t) + \alpha (t) \left[ x_i(t)  - m_{c}(t) \right],
\end{equation}
\[
m_{k}(t+1) = m_{k} ; k\neq c,
\]
here $\alpha(t)$ is monotonically decreasing sequence of scalar valued gain coefficients $0\leq \alpha(t) \leq 1$.

A simple architecture of SOM  is shown in Figure \ref{fig:som_architecture}(a). SOM uses a kernel function, called neighbourhood function $N_c$ and shown in Figure \ref{fig:som_architecture}(a). At each learning step all the cells within the neighbourhood $N_{c}$ are updated where as the cell out side the $N_{c}$ are left intact. The most commonly used neighbourhood function $N_c$ is Gaussian, given by 
\[h(c, i) = \alpha (t) exp[-sqdist(c,i)/2\sigma^2(t)],\] where $\sigma$ is monotonically decreasing function of time, $sqdist(c, i)$ is the square of the geometric distance between the nodes $c$ and $i$ of the grid of SOM components.

This neighbourhood is found around a cell which is the best match to a data point (vector) in the instance space. $N_c$'s radius is decreased monotonically with the time as shown in Figure \ref{fig:som_architecture}(a). Now replace $\alpha(t)$ with the kernel so the Equation~\ref{eq:som_learn} can be written as: 
\begin{equation} \label{eq:som_learn3}
m_{k}(t+1) = m_{k}(t) + h(t, c, i) \left[ x_i(t)  - m_{k}(t) \right],
\end{equation}
$\because h_{ci}(t)=\alpha(t)$ within $N_{c}$ and $h_{ci}(t) =0$ outside. 
where $c$ is the best matching unit.

\begin{figure}[ht]
\centering
(a)
    \includegraphics[height=3.5cm]{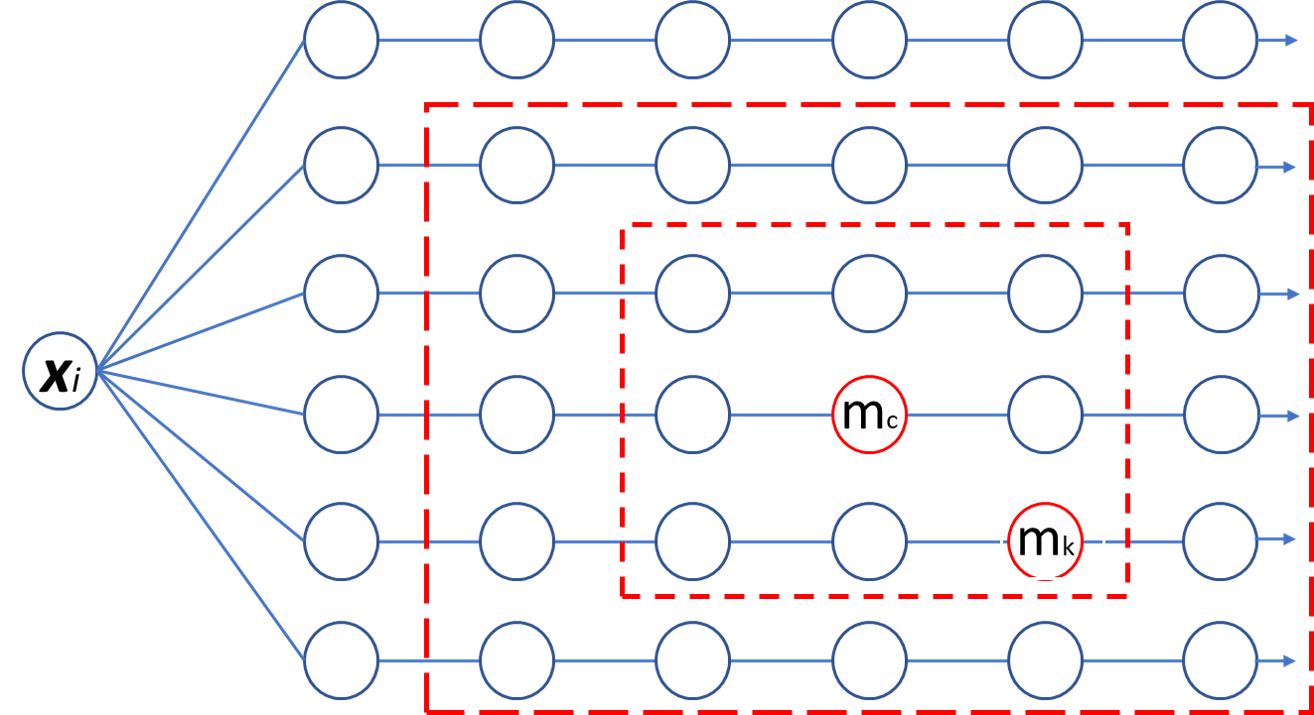}
(b)
    \includegraphics[height=3.5cm]{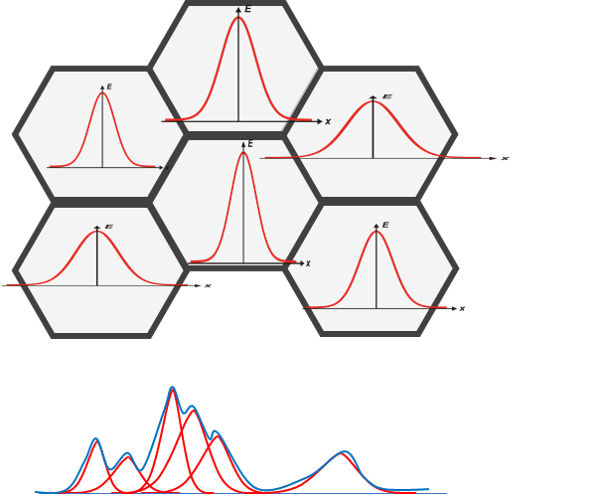}

    \caption{\label{fig:som_architecture} (a) SOM architecture and neighbourhood kernel function $h_{ci}$. (b) Mixture of Gaussian on SOM components} 
\end{figure}

At each iteration of the SOM,  the codebook vectors are updated as per the neighbourhood function, and a best matching unit, a SOM component, is identified to a randomly selected data point from $\textbf{x}$. This operation is repeated until all data points in $\textbf{x}$ are assigned to codebook vectors $m_k$. This clusters most of the dataset $\textbf{x}$ into $K$ bins or components (note the data vectors with high quanitzation error will not have any bin assigned).
\begin{defn}
Vectors in the instance space with least VQ error (from root level codebook vector set, $m_0$) is defined as $x_{m_{0}}$; ($x_{m_{0}}$ is $K \times f$ dimensional)
\end{defn}

\subsubsection{Step 2: Generate root level Gaussian Mixture Model} 
After the map converges (in the previous step) and all the codebook vectors are updated with similar vectors from $\textbf{x}$. We assume that each data vector in $x_{m{_0}}$ can be approximated by a Gaussian component.
\begin{defn}
At the root level Gaussian mixture model has $k$ components, each vector in set $x_{m_{0}}$ is center of a Gaussian component. $\mathbb{G}_{mm_0}$ is the probability density function (pdf) of each data vector in $\Vec{x_i}$, of being generated by this mixture model.
\end{defn} 

We generate a Gaussian Mixture Model of these components, $\mathbb{G}_{mm_0}$, utilizing generative modelling properties of Gaussian Mixtures as described below. We estimate parameters of the mixture model using  EM algorithm, and describe (in next steps) how these parameters help validate the clustering done by k-meas in feature space.
\begin{defn}
The probability density of each component $m_k$ in the mixture model is defined as
\begin{equation}\label{GMM}
p(m_k) = \sum_{k=1}^{K} \pi_k \mathcal{N}(\textbf{x}|\mu_k, \Sigma_k),
\end{equation}
where $\pi_k$ is mixing coefficient of each component in the mixture, such that $\sum_{k=1}^k (\pi_k =1); \forall{\pi_k >= 0}$.
\end{defn}

The task is to find probability of each data vector in $\Vec{x_i}$ with respect to the entire mixture. The assumption is that the data is generated by selecting  probability $\pi_k$ as mixture component $\theta_k (\theta_k =\pi_k,\mu_k, \Sigma_k)$ and then drawing a data item from the corresponding distribution $p(.|\theta_k)$.

Given data vectors in each component $m_k$ and parameters $\theta_k$ for each component, the EM algorithm is used to maximize $\theta_k$ of the mixture model. 


The task is to maximize Equation \ref{GMM}, which is done by taking log and differentiating with respect to $\mu$ and $\Sigma$. Log is used, as maximising log is equivalent of maximizing a function and it offers computation ease (takes care of division by zero).

The log likelihood of the entire mixture model is given by:
\begin{equation}\label{Mixture_ll}
\ln p(\textbf{x}|\theta_k) = \ln \sum_{i=1}^{N} p(m_k)
\end{equation}
and with respect to each component:

\begin{center}
$= \sum_{i=1}^{N} \ln \left( \sum_{k=1}^K \pi_k  \mathcal{N} (\textbf{x}|\mu_{k}, \Sigma_k) \right)$    
\end{center}
where $\mathcal{N} (\textbf{x}|\mu_{k}, \Sigma_k)$ 
of each component is calculated by maximum likelihood estimate for normal distributions, given by:
\begin{equation}\label{logLL1}
\begin{split}
\ln p(\theta_k) &= \sum_{i=1}^{N} \ln \pi_k  + \ln \bigl[(2\pi)^{f} |\Sigma|^{-1/2} \\
& -  (x_{i} - \mu_k)^{T}\Sigma^{-1}(x_i-\mu_k) / 2 \bigr]. \\
\end{split}
\end{equation}
Dropping the constant additive terms in \eqref{logLL1} we get the log-likelihood as
\begin{equation}\label{logLL}
\ln p(\theta) = \frac{1}{2} \ln |\Sigma| ^{-1/2} - \frac{1}{2} \sum_{i=1}^{N} (x_{i} - \mu_{k})^{T}\Sigma^{-1}(x_{i}-\mu_{k})
\end{equation}

Setting equation \ref{logLL} to zero and differentiating with respect to  $\mu$ and $\Sigma$ gives us:
\begin{equation}\label{mu_ml}
\mu_{ml_k} = \sum_{i=1}^{N} x_{i}
\end{equation}
\begin{equation}\label{sigma_ml}
{\Sigma_{ml_k}} = \frac{1}{N}\sum_{i=1}^{N} (x_{i} - \mu_k)(x_{i} - \mu_k)^{T},
\end{equation}
which are called sufficient statistics for mixture models, or Maximum Likelihood (ML) estimates of a Mixture model. The solution of Equation \ref{logLL} can not be found in closed form, therefore EM algorithm is used. We use $\mu_{ml_k}$ and $\Sigma_{ml_k}$ of each component $k$ to calculate probabilities of each data point in $\Vec{x_i}$ of being generated by this Mixture model, called $\mathbb{G}_{mm_0}$. A schematic diagram of linear superposition of Gaussians, from the components of SOM's is shown in  Figure \ref{fig:som_architecture}(b).  

\begin{defn}
$\Gamma_0$ is the index of vectors in instance space with $\mathbb{G}_{mm_{0}} > .5$
\end{defn}

\textbf{While loop start:}

\subsubsection {Step 3: Partition sample space using k-means with C=2} 
Partition the feature space using standard k-means. We can partition the entire set $\textbf{x}$ or $\Gamma_0$. k-means converges quite fast even for very large dimensions, but might get stuck in a local minima. Step 4, takes k-means out of the local minima as explained below. We use $C=2$ at each iteration, this enables building a binary-tree. At fist step we partition $\textbf{x}$ or $\Gamma_0$ into two sets $\textbf{x}_1$ and $\textbf{x}_2$, such that $\textbf{x}_1 \in \mathbb{R}^{N \times f1}$ and $\textbf{x}_2 \in \mathbb{R}^{N \times f2}$; where $f = f_1 + f_2$.

\subsubsection{Step 4: Vector quantization of each child} 
Repeat the step~1 for both $\textbf{x}_1$ and $\textbf{x}_2$ by randomly initiating two codebook vector sets $m_{11}$ and $m_{12}$. We identify the closest ids of observations for each codebook centeroids $m_{11}$ and $m_{12}$ from the children $\textbf{x}_1$ and $\textbf{x}_2$. 

\begin{defn}
The id's of closest observations of parent dataset from the converged codebook $m_0$ is defined as $x_{m_0}$ and ids of the closest observations for children as $x_{m_{11}}$ and $x_{m_{12}}$. 
\end{defn}

\subsubsection{Step 5: Generate Gaussian Mixture Model for each child} 

Repeat the same process as described in the Step\#2 for each partitioned dataset $\textbf{x}_1$ and $\textbf{x}_2$, where we randomly generate two coodebook vectors $m_{11}$ and $m_{12}$ and learn two mixture models, one from the dataset $x_{m_{11}}$ and other on the dataset $x_{m_{12}}$ and  learn parameters of each mixture model using EM algorithm, giving us  $\mathbb{G}_{mm_{11}}$ and $\mathbb{G}_{mm_{12}}$. 

\subsubsection{Step 6: Find Posteriors of root codebook in generating each child mixture models}

To find the posterior probabilities of root codebook $x_{m_{0}}$ to be responsible of generating child Gaussian mixtures, we use joint probability of root and child datasets, and find the posteriors using Bayes theorem, which is stated as $p(y|x) = \frac{p(x|y)p(y)}{\sum p(x)} $.

\begin{defn}
Posteriors of set $x_{m_{0}}$ in generating child mixture models are defined as $p(x_{m_{0}}|\mathbb{G}_{mm_{11}})$ and $p(x_{m_{0}}|\mathbb{G}_{mm_{12}})$ given by:

\begin{equation}{\label{xm0_gmm_11}}
p(x_{m_{0}}|\mathbb{G}_{mm_{11}}) = \left( \mathbb{G}_{mm_{11}} \odot p(x_{m_{0}}) \right) \oslash \left(\sum_{k=1}^{K} (\mathbb{G}_{mm_{11}})_k \right),
\end{equation}
and
\begin{equation}{\label{xm0_gmm_12}}
p(x_{m_{0}}|\mathbb{G}_{mm_{12}}) = \left( \mathbb{G}_{mm_{12}} \odot p(x_{m_{0}}) \right) \oslash \left(\sum_{k=1}^{K} (\mathbb{G}_{mm_{12}})_k \right)
\end{equation}
where $\odot$ represents element-wise multiplication and $\oslash$ -- element-wise division. 
\end{defn}

Where as, $p(x_{m_{0}})$is apriori probability of each data vector $x_{m_{0}}$ given by the number of hits a SOM component gets multiplied by the neighbourhood function $h_{ci}$ and divided by the sum, given as: 
\begin{equation}{\label{p_m_0}}
p(x_{m_{0}})= \nicefrac {p(m_{0}) \times h_{ci}}{\sum_{k=1}^{K} p(m_{{0})_k} h_{{ci}_k}} 
\end{equation} 

The product of posterior probabilities $p(x_{m_{0}}|\mathbb{G}_{mm_{11}})$ and $p(x_{m_{0}}|\mathbb{G}_{mm_{12}})$ is used as stopping criteria. The larger the product the better the split.
Note - For ease of computations we keep the number of SOM components for each child  $K$ as well. 
\begin{lem}
Posteriors of $x_{m_{0}}$ in generating the child mixture models $\mathbb{G}_{mm_{11}}$ and $\mathbb{G}_{mm_{12}}$ provide a set of vectors in the instance space which are the most discriminating set for partitioning a dataset in the feature space.
\end{lem}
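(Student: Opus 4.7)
The plan is to formalise the notion of a ``discriminating set'' and then to show that the posteriors defined in the expressions for $p(x_{m_0}\mid\mathbb{G}_{mm_{11}})$ and $p(x_{m_0}\mid\mathbb{G}_{mm_{12}})$ select exactly such a set. First I would declare a root codebook vector $x_{m_0}^{(i)}$ to be \emph{discriminating} for the partition $(\mathbf{x}_1,\mathbf{x}_2)$ if its presence substantially alters the likelihood of the feature-space data being generated under the two child mixture models $\mathbb{G}_{mm_{11}}$ and $\mathbb{G}_{mm_{12}}$. Equivalently, a discriminating vector is one that is strongly ``claimed'' by at least one of the two child distributions, i.e.\ carries non-negligible posterior mass.

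Next I would invoke Bayes' theorem, as already used in the definitions of $p(x_{m_0}\mid\mathbb{G}_{mm_{1j}})$, to interpret these posteriors as the probability that the root vector $x_{m_0}$ is responsible for generating the feature-space child $\mathbf{x}_j$. Because the two child mixtures are fit independently by EM to the disjoint k-means halves, their supports within the root codebook are largely disjoint except for an overlap region. A root vector with high posterior under both child mixtures lies in this overlap and is therefore a strong discriminator, since reassigning it shifts both child distributions; a vector with negligible posterior under both is uninformative about the split; and a vector that is cleanly explained by only one child is already separated and contributes little discriminating information at the boundary.

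Then I would connect this to the PPP objective $\Phi$, defined as the product of the two posteriors. Driving $\Phi$ upward forces both $p(x_{m_0}\mid\mathbb{G}_{mm_{11}})$ and $p(x_{m_0}\mid\mathbb{G}_{mm_{12}})$ to be large simultaneously, which by the argument above isolates the subset of root vectors that jointly contribute to both child mixtures. These are precisely the vectors whose membership in one partition or the other determines the location of the partition boundary in feature space, and hence the lemma's ``most discriminating set''.

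The hard part is that ``most discriminating'' is not given a formal definition in the paper, so the final step is not purely mechanical. To make the argument watertight one would need to fix a concrete discriminability functional, for example the mutual information $I(x_{m_0};\mathbf{x}_j)$ or a symmetrised Kullback-Leibler divergence between $\mathbb{G}_{mm_{11}}$ and $\mathbb{G}_{mm_{12}}$ restricted to candidate subsets of root vectors, and then verify that the subset selected by thresholding the product of posteriors coincides, up to the chosen threshold, with the subset that maximises this functional. Establishing this equivalence, rather than the Bayesian bookkeeping that precedes it, is the real obstacle; a clean proof may require an additional regularity assumption on the overlap between $\mathbb{G}_{mm_{11}}$ and $\mathbb{G}_{mm_{12}}$ so that the posterior ordering and the divergence ordering agree.
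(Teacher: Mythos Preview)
The paper does not actually prove this lemma; it is asserted without justification and the only proof environment in the paper is for the subsequent theorem, which itself is a heuristic argument rather than a derivation. So there is no ``paper's own proof'' to match, and your proposal is already more careful than anything the authors supply. That said, the direction you take diverges from the paper's intended mechanism in a way that matters.

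Your central move is to declare a root vector \emph{discriminating} when it has high posterior under \emph{both} child mixtures, i.e.\ when it lies in the overlap $\Gamma_1\cap\Gamma_2$, and you then read $\Phi$ as pushing both posteriors $p(x_{m_0}\mid\mathbb{G}_{mm_{11}})$ and $p(x_{m_0}\mid\mathbb{G}_{mm_{12}})$ to be large on the same vectors. This is not what the paper's $\Phi$ measures. The quantities $\phi_1$ and $\phi_2$ are defined as $|\Gamma_j\cap\Gamma_0|/|\Gamma_j|$, so each one scores the overlap of a child's high-posterior set with the \emph{parent} set $\Gamma_0$, not with the other child. Maximising $\Phi=\phi_1\phi_2/(\phi_1+\phi_2)$ drives both child sets to sit inside the parent's high-probability region simultaneously; it says nothing about the intersection $\Gamma_1\cap\Gamma_2$, and in fact the paper's picture is that $\Gamma_1$ and $\Gamma_2$ should be largely \emph{disjoint} subsets of $\Gamma_0$ once $\Phi$ is optimal. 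The ``discriminating set'' the authors have in mind is $\Gamma_1\cup\Gamma_2$ (the union of vectors each child confidently claims), not the boundary vectors you isolate.

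Consequently the Bayesian bookkeeping in your first two paragraphs is fine, but the link you draw to $\Phi$ in the third paragraph is pointing at the wrong overlap, and the optimisation you describe would not recover the set the lemma refers to. Your final paragraph is the most honest part of the proposal: the term ``most discriminating'' is never defined in the paper, and without fixing a functional (your mutual-information or KL suggestions are reasonable candidates) the lemma is not a mathematical statement at all. Any rigorous version would first have to repair that gap, and then argue about parent--child consistency rather than child--child overlap.
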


\subsubsection{Step 7: Stopping Criteria and the most discriminating set}
\begin{defn}
In the instance space, $\Gamma_0$ is the index of vectors with $\mathbb{G}_{mm_{0}} > .5$, $\Gamma_1$ is the index of vectors with $p(x_{m_{0}}|\mathbb{G}_{mm_{11}}) > .5$, and
$\Gamma_2$ is the index of vectors with $p(x_{m_{0}}|\mathbb{G}_{mm_{12}}) > .5$
\end{defn}
We compute the fraction of overlap between $\Gamma_1$ and $\Gamma_0$ and between $\Gamma_2$ and $\Gamma_0$ and normalise it by the number of vectors in $\Gamma_1$ and $\Gamma_2$ respectively. These fractions are denoted by $\phi_1$ and $\phi_2$, and are calculated as:
\[\phi_1 = 100 \times \frac{| \Gamma_1 \cap \Gamma_0 |}{|\Gamma_1|},\]
and
\[\phi_2 = 100 \times \frac{| \Gamma_2 \cap \Gamma_0  |}{|\Gamma_2|}.\]
These values are used to compute a function $\Phi$, optimising which gives the best posteriors for both children, this function is defined as:
\begin{equation}{\label{responsibility1}}
    \Phi = \frac{\phi_1 \phi_2}{\phi_1 + \phi_2} 
\end{equation}

The output of this function is shown in Figure~\ref{fig:objective}. If there is no overlap between the parent and children, then  $\Phi$ is undetermined ($0/0$), and the algorithm is terminated (and all the process is repeated again, until $\Phi$ has some positive values), plot of $\Phi$ for different iterations is shown in Figure~\ref{fig:objective}(b). 

The schematic diagram of PPP is shown in Figure~\ref{fig:PPP}(b).
\begin{figure}[ht]
(a)
\minipage{0.4\textwidth}%
\centering
\begin{tikzpicture}
    \begin{scope}
        \clip \thirdcircle;
        \fill[black] \firstcircle;
    \end{scope}
    \begin{scope}
        \clip \secondcircle;
        \fill[white] \thirdcircle;
    \end{scope}
    \begin{scope}
        \clip \secondcircle;
        \fill[black] \firstcircle;
    \end{scope}
    
        \begin{scope}
        \clip \secondcircle;
        \clip \thirdcircle;
        \fill[gray] \firstcircle;
    \end{scope}
    \draw \firstcircle node[text=black,above] {$\Gamma_0$};
    \draw \secondcircle node [text=black,below left] {$\Gamma_1$};
    \draw \thirdcircle node [text=black,below right] {$\Gamma_2$};
\end{tikzpicture}
\endminipage
(b)
\minipage{0.6\textwidth}%
    \includegraphics[height=4cm]{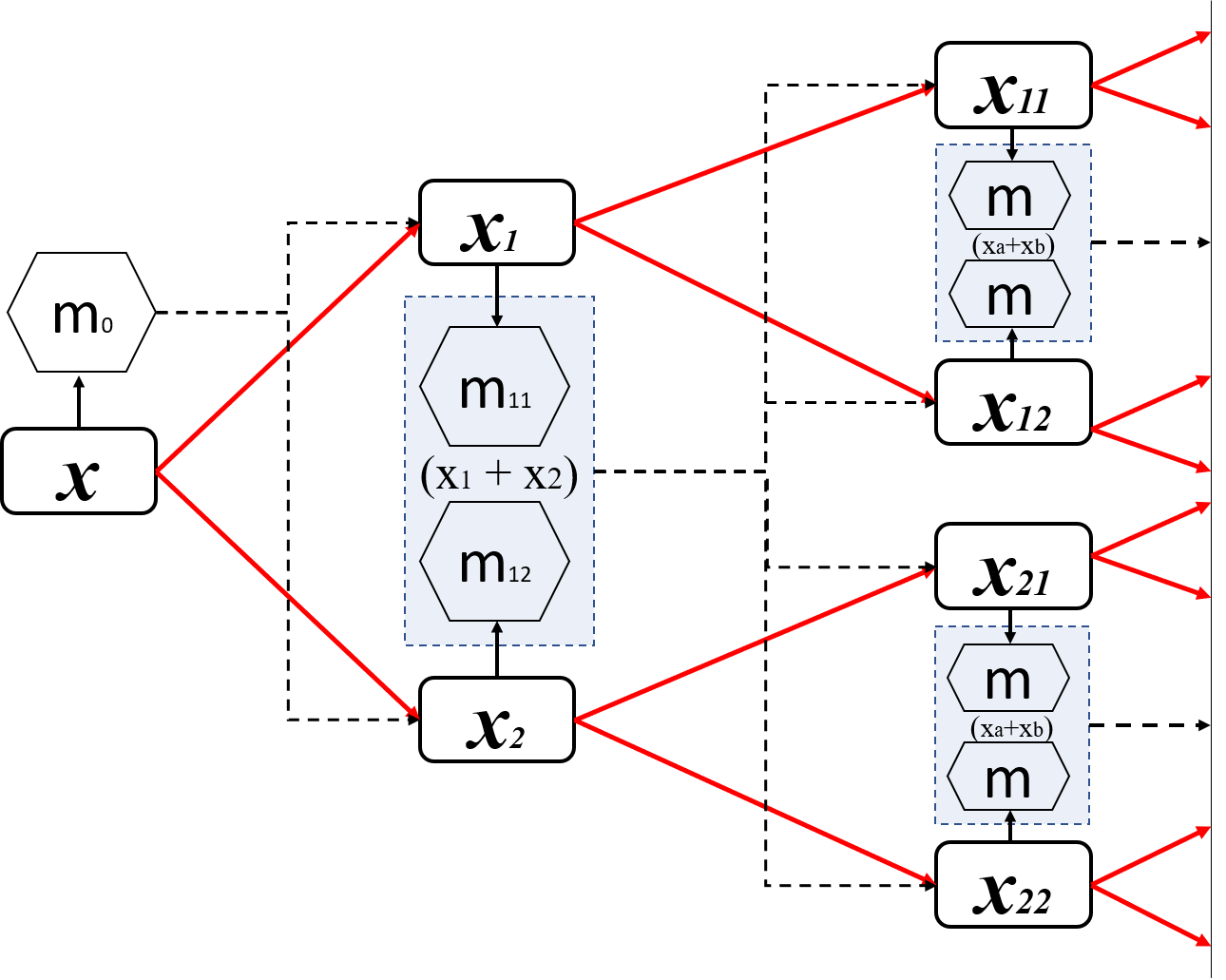}
   
    \endminipage
     \caption{(a) $\Gamma_0$, $\Gamma_1$ and $\Gamma_2$ as defined in step 7 (b) Schematic diagram of PPP}\label{fig:PPP}
\end{figure}
On each node of a tree we iterate till we find this set, and if after some iteration (e.g 20 on the datasets we tried) this set can not found we terminate the split. This builds binary tree with leafs as clusters. To get any desired cluster structure, we can cut this tree at a desired level.

Plot of the posteriors $p(x_{m_{0}}|\mathbb{G}_{mm_{11}})$ and $p(x_{m_{0}}|\mathbb{G}_{mm_{12}})$  is shown in Figure~\ref{fig:objective}(a) (After the second child, we plot average for the subsequent children). Note that these probabilities are max at second split for the dataset 1, and monotonically decrease after that, till a stopping criterion. This indicates that the first split was not that good, and second split was best split (by k-means). As we can imagine, the more we partition a dataset, the more harder it will become to split it further, which is evident from the decreasing responsibilities (posteriors) for both children in Figure~\ref{fig:objective}(a).

\textbf{end while}

\subsubsection{Step 8: Repeat steps 1 to 7 for both partitions} 
Once we split the root dataset into two children and calculate the responsibilities of the parent in generating each child, the vectors in instance space corresponding to $\Gamma_1$ and $\Gamma_2$ are passed on to next level as root set for both partitions of the k-means and steps 1-7 are repeated for both partitions, thus building a binary-tree.
\begin{figure}[ht]

\centering
(a)
    \includegraphics[height=4cm]{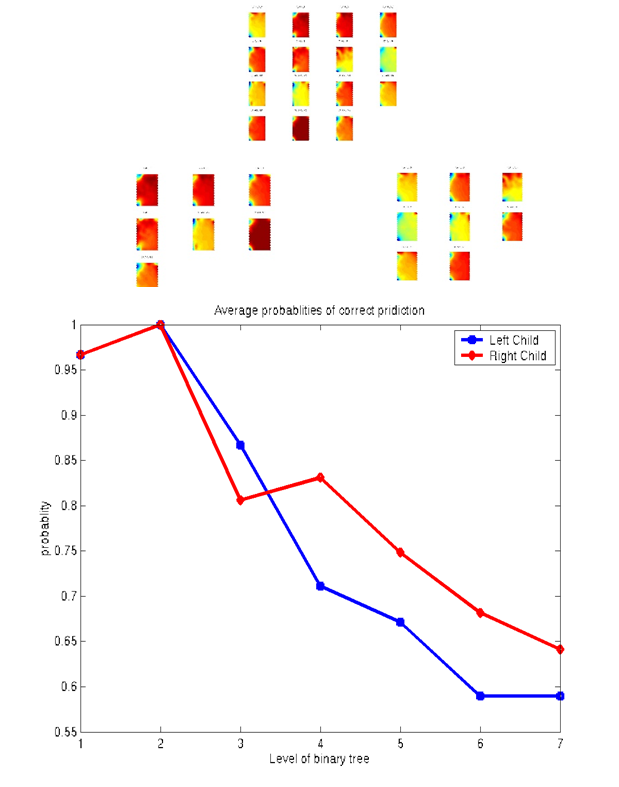}
(b)
     \includegraphics[height=3.5cm]{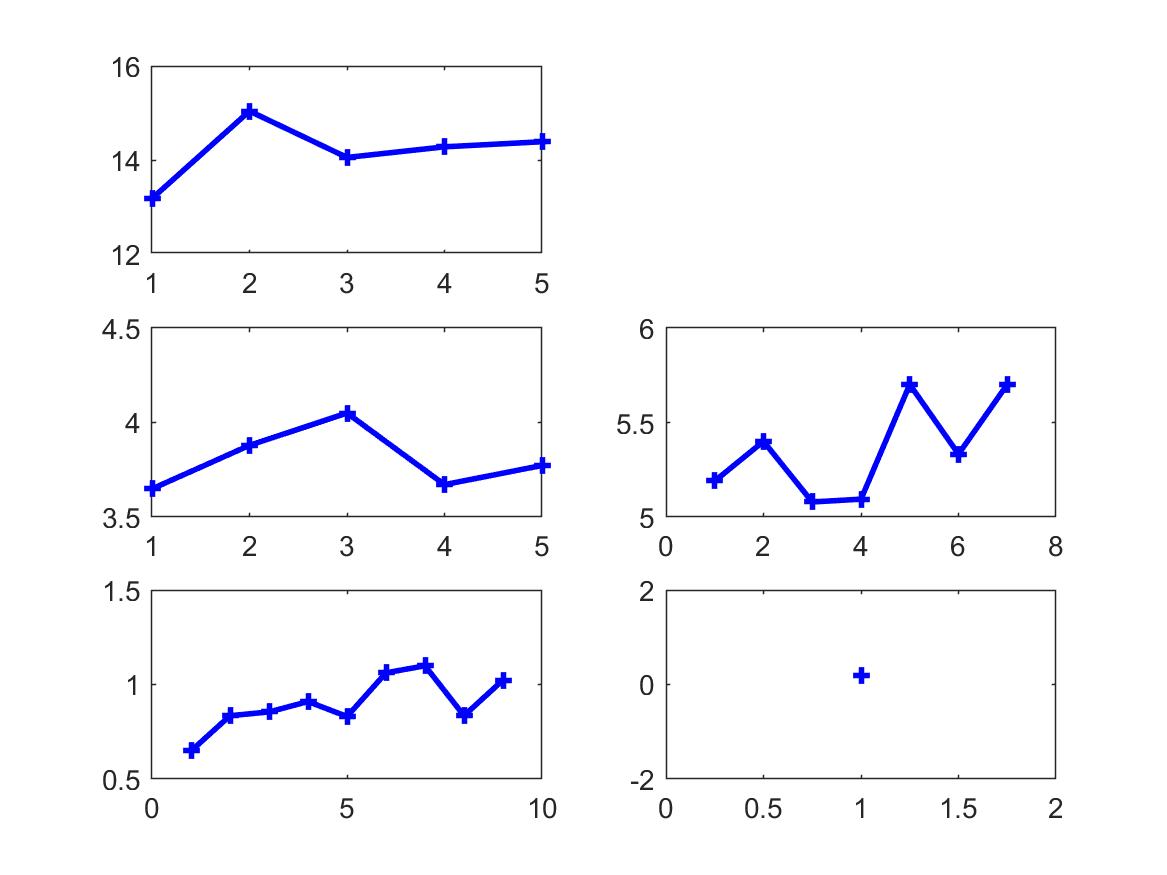}
     \caption{\label{fig:objective} (a) Plot of $\Gamma_1$ and $\Gamma_2$ for dataset-1, (b) Plot of objective function $\Phi$}
\end{figure}
\begin{thm} \label{thm:ppp}
After convergence of $\Phi$, $\Gamma_1$ and $\Gamma_2$ are the most discriminating set of vectors in instance space to split the sample space, and the split by k-manes (in the sample space on these vectors) is best/optimal split. 
\end{thm}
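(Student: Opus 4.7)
The plan is to split the theorem into its two assertions: first, that $\Gamma_1$ and $\Gamma_2$ form the most discriminating set in the instance space at convergence, and second, that the k-means partition inducing these sets is the best split among those explored during the while-loop iterations.

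For the first assertion, I would begin by observing that $\Phi = \phi_1\phi_2/(\phi_1+\phi_2)$ is (up to a factor of $1/2$) the harmonic mean of $\phi_1$ and $\phi_2$. The harmonic mean is maximized only when both arguments are simultaneously large and balanced, so convergence of $\Phi$ to its maximum forces each of $|\Gamma_1 \cap \Gamma_0|/|\Gamma_1|$ and $|\Gamma_2 \cap \Gamma_0|/|\Gamma_2|$ to be as large as possible. Since $\Gamma_0$ indexes vectors whose posterior under the parent mixture $\mathbb{G}_{mm_0}$ exceeds $1/2$, the limit condition $\Gamma_i \subseteq \Gamma_0$ means the children's high-posterior supports lie inside the parent's high-posterior support. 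Consequently the vectors indexed by $\Gamma_1$ and $\Gamma_2$ are exactly those that carry high probability mass under the parent model and are, at the same time, confidently assigned to one of the two children; by construction any vector outside $\Gamma_1 \cup \Gamma_2$ receives ambiguous or low posterior from at least one of the two child mixtures and hence fails to discriminate between the partitions.

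For the second assertion I would invoke the Bayes decomposition in \eqref{xm0_gmm_11}--\eqref{xm0_gmm_12}, which writes the posterior of the root codebook in terms of the child likelihoods $\mathbb{G}_{mm_{11}},\mathbb{G}_{mm_{12}}$ and the prior $p(x_{m_0})$ from \eqref{p_m_0}. The child mixtures are ML fits (via EM, Step 5) to partitions produced by k-means (Step 3), so each pass through the while loop evaluates whether a candidate k-means split is consistent with the underlying probabilistic generator represented by the parent GMM. Convergence of $\Phi$ therefore selects, out of the initialization-induced family of k-means partitions, the one whose induced child mixtures best reproduce the parent mixture; any alternative split necessarily yields smaller $\phi_1$ or $\phi_2$ and hence smaller $\Phi$. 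Combined with the first part, this identifies $\Gamma_1,\Gamma_2$ as the witness of optimality: the split is optimal on the instance set that is itself the most discriminating.

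The main obstacle will be formalizing what ``best/optimal split'' means, since no global objective is fixed in the statement; I would interpret optimality relative to the family of splits visited by the algorithm under varying initializations, which aligns with the paper's stated goal of eliminating initial-condition dependence. A secondary technical point is ruling out degenerate maximizers of $\Phi$ when $|\Gamma_1|$ or $|\Gamma_2|$ is very small: a monotonicity argument together with the termination rule (Step~7, $\Phi$ undefined when the overlap vanishes) is needed to ensure that convergence occurs at a non-trivial fixed point rather than at a spurious boundary.
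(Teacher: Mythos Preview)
Your proposal is reasonable given the informal nature of the statement, but it proceeds along a different axis than the paper's own argument. You organize the proof around the harmonic-mean structure of $\Phi$ and the set-containment $\Gamma_i \subseteq \Gamma_0$ at the maximum, then interpret optimality as being relative to the family of k-means splits visited over the while loop. The paper instead casts the problem as classification: it reads the posteriors in \eqref{xm0_gmm_11}--\eqref{xm0_gmm_12} as cluster boundaries given by the Mahalanobis form $\sum_i (x_i-\mu_k)^{T}\Sigma^{-1}(x_i-\mu_k)$, treats $\phi_1,\phi_2$ as the percentages of correct classification of parent instances by a Bayesian generative classifier built from the child GMMs, and argues heuristically that maximizing $\Phi$ maximizes the common VQ support between parent and children. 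The paper also asserts directly that optimizing $\Phi$ yields the same $\phi_1,\phi_2$ for any random k-means initialization, rather than restricting, as you do, to optimality over the visited family. Your route is more explicit about why the functional form of $\Phi$ forces both overlaps to be simultaneously large and is more candid about what ``optimal'' can mean here; the paper's route supplies the geometric and generative-modelling interpretation (Mahalanobis boundaries, classification view) that explains why large posterior overlap should translate into discriminative power. Since neither argument is fully rigorous---the theorem's key terms are never formally defined---the two approaches are best seen as complementary heuristics rather than competing proofs.
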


\begin{proof}[Proof of Theorem \ref{thm:ppp}]
If k-means splits the sample space in such a way that the posteriors of the parent GMM in generating the children GMMs are max, this is a best split by the k-means in sample space. This is achieved by modelling joint probabilities of GMMs built on the root and the partitioned sets by k-means. Posteriors of Gaussian Mixtures are cluster boundaries given by 'mahalanobis' distance  $\sum_{i=1}^{N} (x_{i} - \mu_{k})^{T}\Sigma^{-1}(x_{i}-\mu_{k})$. Equations $\ref{xm0_gmm_11}$ and $\ref{xm0_gmm_12}$ are the cluster boundaries in each child. We turn the clustering problem into classification problem by finding the probabilities of $\Gamma_1$ and $\Gamma_2$ to be generated by the parent GMM. $\phi_1$ and $\phi_2$ are the percentages of correct classification of instances present in the parent given a child GMM, and are calculated by generative modelling. 

Bayesian Generative model is built from the GMM of the parent to classify the children. It is intuitive to imagine that if a GMM built on the data set partitioned by k-means has most instances common to the parent GMM, the VQ at the parent and child level has the most common vectors from the instance space. $\Phi$ maximises this set. When $\Phi$ is optimal, the datasets associated with $\Gamma_1$ and $\Gamma_2$ are two distinct clusters in the instance space and partition done by k-means in the sample space is the most optimal partition. 

Optimizing $\Phi$ grantees the same $\phi_1$ and $\phi_2$ for any random initialisation of k-means. The partitioning is terminated when no positive values of $\Phi$ can be found, means there is no data common in the instance space between the VQ of a parent to the VQ of a child. This stops partitioning of the tree branch at a point where no cluster boundaries can be found on child GMM (for posteriors below .5).
\end{proof}
\section{Results and Commentary} 
We analysed few publicly available datasets, below are some results:

\textbf{Dataset-1 }
We took a publicly available contest dataset available \href{http://www.camda.duke.edu/camda03/datasets/index.html}{\underline{here}}. 
This is a microarray gene expression dataset, where genes are organised as rows and columns are different patient samples of the tumour. The cells are ratios of control vs sample cancer. The PPP results are shown Figure \ref{fig:data1}(a). We can see the PPP separated two different type of cancers (this though needs further domain experts verification) but the visual representation of SOM component planes \cite{vesanto1999som} show that PPP has done good clustering. Plus the tree is relatively shallow and the leaf nodes can be treated as clusters. 

\textbf{Dataset-2}
We used a kaggle competition dataset and clustered ALL and AML cancer sub-types of leukemia. The dataset is available  \href{https://www.kaggle.com/crawford/gene-expression}{\underline{here}}.
Though this kaggle competition is for classification (on feature space) we used PPP clustering and the results are presented in Figure \ref{fig:data1}(b). It is evident that the unsupervised learning was able to split two cancers quite well. If we take the leaf sets, it generates the cluster structure and the error rate is quite low.
\begin{figure*}[ht]
\centering
    \includegraphics[height=4cm]{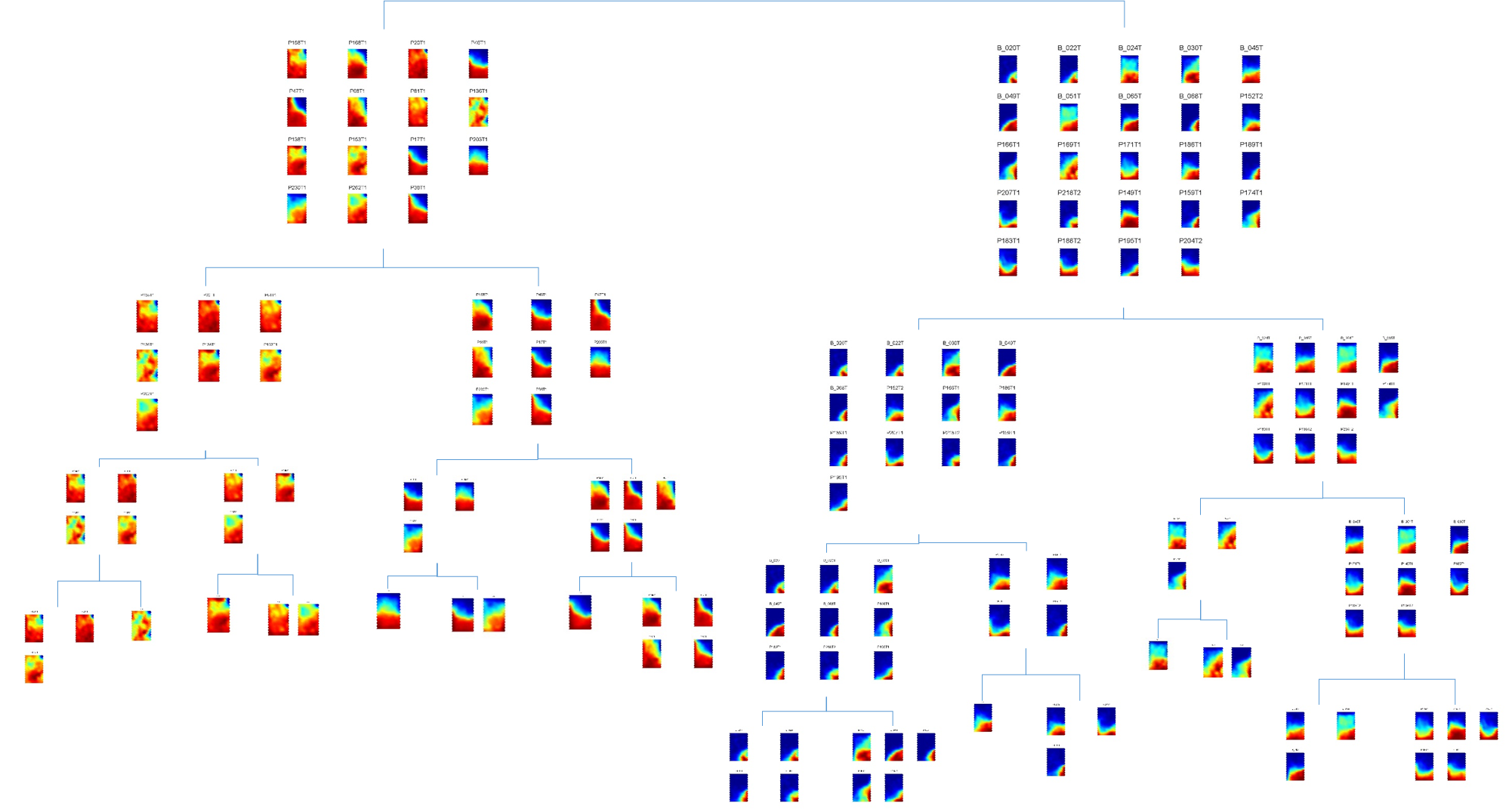}
      \includegraphics[height=4cm]{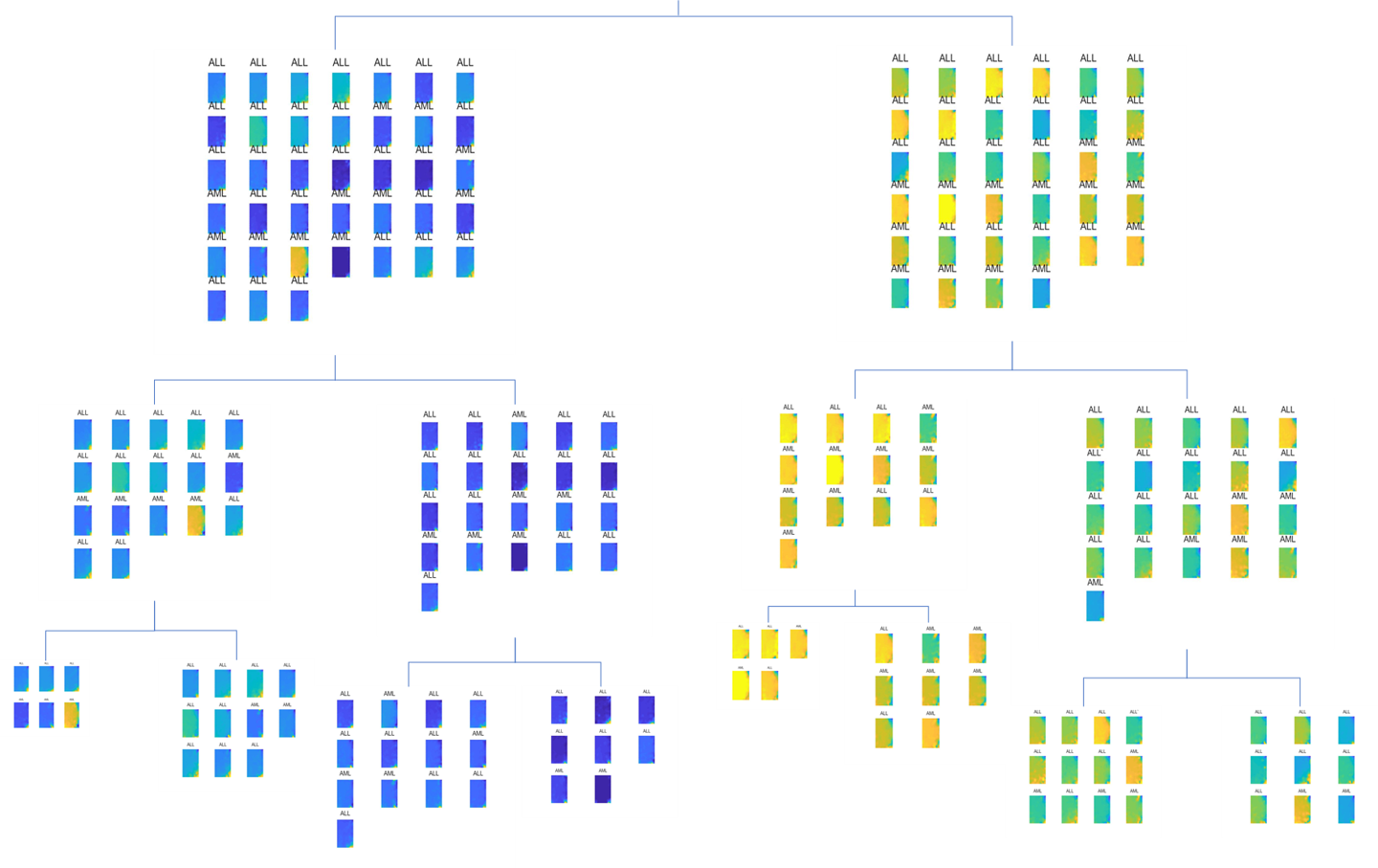}
    \caption{\label{fig:data1} (a) CAMDA'03 competition dataset and (b) Kaggle cancer data clustering using PPP}
\end{figure*} 

We showed that PPP enables k-means and VQ to find vectors in instance space which offer the best 'separability' for the sample space and increases 'clusterability' of a dataset. This mechanism converges to same settings irrespective of the initial conditions. Figure~\ref{fig:PPP_GMM2} is visual representation of this in two dimensions. The leafs of the binary tree thus built give final cluster structure of the data, thus eliminating the need to specify number of clusters in advance. This binary-tree can be cut at a desirable level.

The VQ on instance space also enables reducing the size of a dataset to manageable subset, given by $\Gamma_0$ (the granularity of which can be decided by the computational resources at hand), making this mechanism suitable for very large datasets.
\begin{cor}
Not all datasets, or problem sets can have such a discriminating set. But the datasets for which $\Gamma_1$ and $\Gamma_2$ (corresponding to optimal $\Phi$) can be found, the Llyod's algorithm can be used to partition using this set without getting stuck in a local minima.  
\end{cor}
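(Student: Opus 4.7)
The plan is to derive the corollary as a direct consequence of Theorem \ref{thm:ppp} combined with the separation-based analyses of Lloyd's algorithm cited in the introduction, namely \cite{ostrovsky2006effectiveness, ostrovsky2012effectiveness, balcan2009approximate}. First I would invoke Theorem \ref{thm:ppp} to reduce the problem to the following statement: on the restricted instance set $\Gamma_1 \cup \Gamma_2$, the two-means objective $f = \sum_{x} \min_{c\in\textsc{c}} \|x-c\|^2$ satisfies a clusterability condition strong enough that Lloyd's iteration, starting from any reasonable seeding, cannot terminate at a sub-optimal fixed point. The role of PPP in the proof is only to certify that this restricted input lies in the ``well-clusterable'' regime for which Lloyd-style convergence guarantees already exist in the literature.

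Next I would translate the probabilistic optimality of $\Phi$ into geometric separation. By the definitions preceding \eqref{responsibility1}, $\Gamma_j$ is exactly the set of instances on which the posterior $p(x_{m_0}\mid \mathbb{G}_{mm_{1j}})$ exceeds $0.5$, so each $\Gamma_j$ already lies inside the Bayes decision region of its associated child Gaussian. Maximising $\Phi=\phi_1\phi_2/(\phi_1+\phi_2)$ simultaneously drives $\phi_1$ and $\phi_2$ toward $100\%$, which forces the two Bayes regions to be essentially disjoint and each to coincide with the support of its child GMM. Using the Mahalanobis-distance form $(x_i-\mu_k)^{T}\Sigma^{-1}(x_i-\mu_k)$ that appears in the proof of Theorem~\ref{thm:ppp}, this disjointness yields a lower bound on the inter-centroid distance $\|\mu_{11}-\mu_{12}\|$ and an upper bound on the within-cluster dispersion on $\Gamma_1\cup\Gamma_2$. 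The ratio of the latter to the former is precisely the quantity controlled by the $\varepsilon$-separation assumption of \cite{ostrovsky2006effectiveness}.

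Having established $\varepsilon$-separation on the discriminating set, I would invoke the result of Ostrovsky et al.\ (and the analogous statement in \cite{balcan2009approximate}) which guarantees that, on $\varepsilon$-separated inputs, every Lloyd fixed point is within a bounded factor of the globally optimum 2-means cost, and that with the standard careful seeding it coincides with the global optimum. Combined with Theorem~\ref{thm:ppp}, this says that the partition returned by Lloyd's algorithm on $\Gamma_1\cup\Gamma_2$ matches the partition certified optimal by $\Phi$, so no local-minimum trap can arise. The contrapositive of this argument also explains the first sentence of the corollary: if $\Phi$ admits no positive optimum, the separation constant cannot be extracted and the Ostrovsky--Balcan guarantee is vacuous, so the protection against local minima is lost.

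The main obstacle I expect is the quantitative step in the second paragraph: converting the posterior-overlap bound encoded by an optimal $\Phi$ into an explicit $\varepsilon$-separation constant. This requires controlling the covariance matrices $\Sigma_{ml_k}$ estimated in \eqref{sigma_ml} so that a Mahalanobis gap between the two child Gaussians implies a Euclidean gap between the underlying data supports. A clean way to do this is to assume (or to verify inside the PPP construction) that the child covariances are well-conditioned relative to the inter-centroid distance, and then apply a standard Gaussian-tail estimate to pass from probabilistic overlap to deterministic cluster separation. If that covariance control fails, the corollary should be stated as a high-probability rather than a deterministic guarantee.
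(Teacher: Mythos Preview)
The paper does not actually supply a separate proof for this corollary; it is stated bare, immediately after the proof of Theorem~\ref{thm:ppp}, as a direct consequence of that theorem together with the informal claim inside the theorem's proof that ``Optimizing $\Phi$ guarantees the same $\phi_1$ and $\phi_2$ for any random initialisation of k-means.'' So there is nothing substantive to compare your proposal against: the paper's implicit argument is simply ``Theorem~\ref{thm:ppp} says the split at optimal $\Phi$ is the best split and is insensitive to initialisation, hence Lloyd on $\Gamma_1\cup\Gamma_2$ does not get trapped.''

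Your proposal is far more ambitious than anything the paper attempts. You try to convert the PPP optimality statement into a genuine $\varepsilon$-separation condition and then invoke the Ostrovsky--Rabani--Schulman--Swamy and Balcan--Blum--Gupta guarantees. That is the right instinct if one wants an actual theorem, and it correctly identifies the only route in the cited literature that turns ``well-separated'' into ``Lloyd avoids bad local minima.'' You are also right to flag the quantitative gap: passing from ``posteriors exceed $0.5$ and $\Phi$ is maximal'' to a concrete bound on the ratio of within-cluster spread to inter-centroid distance requires control of the estimated covariances $\Sigma_{ml_k}$, and nothing in the paper supplies that control. Without it, the appeal to \cite{ostrovsky2006effectiveness} is formal only. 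So your plan is sound as an outline but, as you yourself note, the second paragraph is where the real work lies and where the argument may degrade to a high-probability statement under additional assumptions. The paper does not engage with any of this; it treats the corollary as a heuristic remark rather than a proved statement.
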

\begin{figure}[ht]
\centering
     \includegraphics[height=5cm]{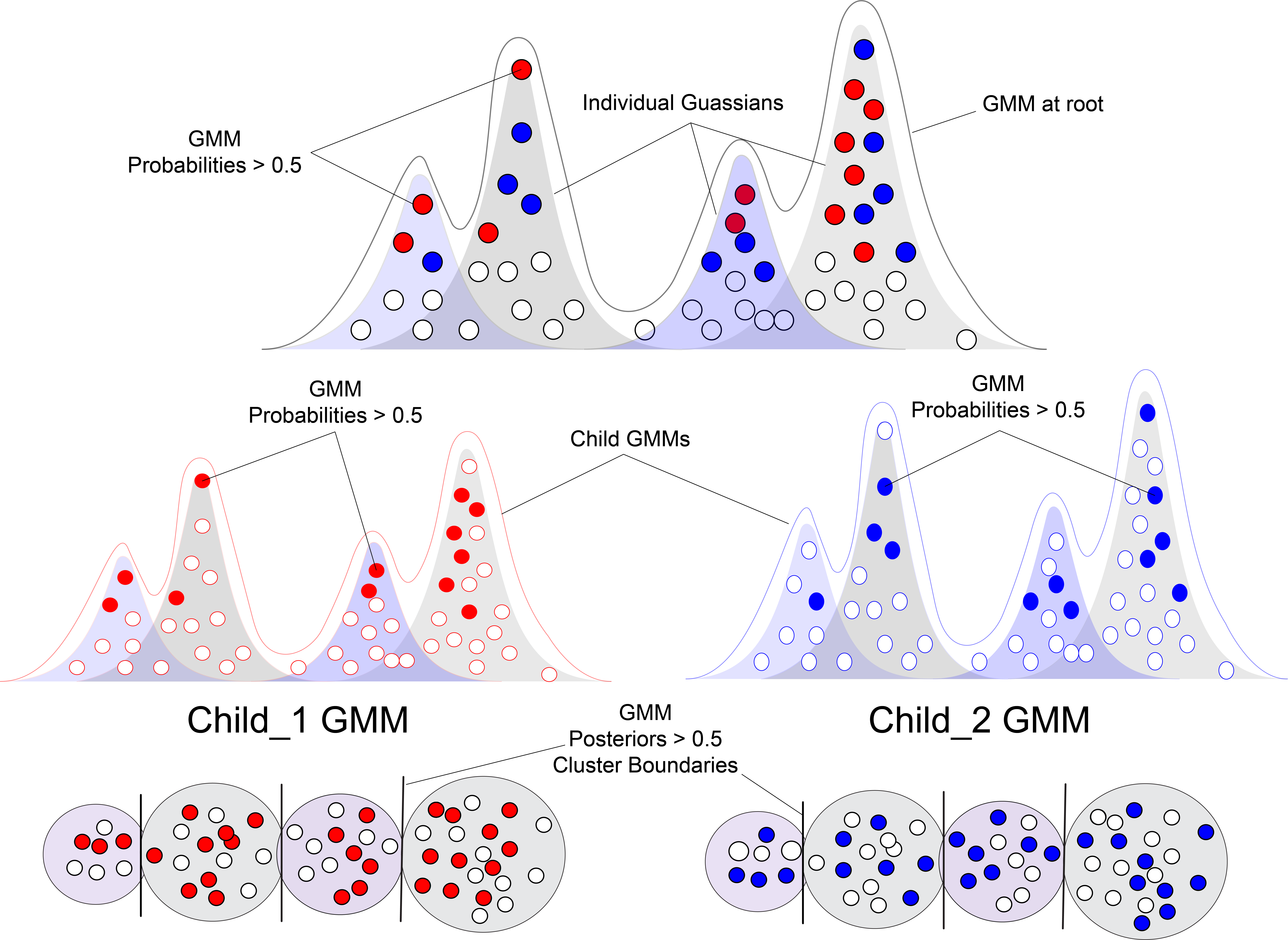}
     \caption{\label{fig:PPP_GMM2} PPP visualization for a hypothetical two-dimensional problem}
\end{figure}


\end{document}